\newtheorem{theorem}{Theorem}
\newtheorem{corollary}{Corollary}
\newtheorem{algorithm}{Algorithm}
\author{Andreas Rosowski\\ University of Siegen, Germany\\ andreas.rosowski@student.uni-siegen.de}
\begin{document}

\title{Fast Commutative Matrix Algorithm}

\maketitle
\begin{abstract}
We show that the product of an $n \times 3$ matrix and a $3 \times 3$ matrix over a commutative ring can be computed using $6n+3$ multiplications. For two $3 \times 3$ matrices this gives us an algorithm using $21$ multiplications. This is an improvement with respect to Makarov's algorithm using $22$ multiplications\cite{makarov:matrmult}. We generalize our result for $n \times 3$ and $3 \times 3$ matrices and present an algorithm for computing the product of an $l \times n$ matrix and an $n \times m$ matrix over a commutative ring for odd $n$ using $n(lm + l + m -1)/2$ multiplications if $m$ is odd and using $(n(lm +l + m - 1) + l - 1)/2$ multiplications if $m$ is even. Waksman's and Islam's algorithm for odd $n$ needs $(n-1)(lm + l + m-1)/2 + lm$ multiplications \cite{islam, waksman}, thus in both cases less multiplications are required by our algorithm. We also give an algorithm for even $n$ using $n(lm + l + m -1)/2$ multiplications without making use of divisions, since Waksman's and Islam's algorithm make use of some divisions by $2$ \cite{islam, waksman}. Furthermore we present a novelty for matrix multiplication: In this paper we show that some non-bilinear algorithms with special properties can be used as recursive algorithms. In comparison to bilinear algorithms for small $n \times n$ matrices say $n<20$ we obtain some better results. From these non-bilinear algorithms we finally obtain approximate non-bilinear algorithms. For instance we obtain an approximate non-bilinear algorithm for $5 \times 5$ matrices that uses only $89$ multiplications. If at all it is possible to compare this algorithm with a bilinear algorithm we obtain a better result with respect to Smirnov's algorithm \cite{smirnov5x5}.
\end{abstract}

\section{Introduction}
In 1969 Strassen showed that the product of two $n \times n$ matrices can be computed using $O(n^{\log_2 (7)})$ arithmetic operations \cite{strassen:volker}. This work opened a new field of research and over the years better upper bounds for the exponent of matrix multiplication were published. In 1990 Coppersmith and Winograd obtained an upper bound of $2.375477$ for the exponent \cite{coppersmith:winograd}. For a long time this was the best result. Since 2010 further improvements were obtained in a series of papers \cite{stothers_d, le:gall, stothers, williams, williams_2}. The best result so far was published in 2014 by Le Gall who obtained an upper bound of $2.3728639$ for the exponent \cite{le:gall}.\\

In this paper we first study the product of an $n \times 3$ matrix $A$ and a $3 \times 3$ matrix $B$ over a commutative ring and show that we can compute the product $AB$ using $6n + 3 $ multiplications. The basic idea is to improve the computation of the product of a $1 \times 3$ vector $a$ and a $3 \times 3$ matrix $B$ over a commutative ring in the sense that we try to obtain as much as possible multiplications that contain only entries of the matrix $B$ but without using more than $9$ multiplications overall. The multiplications which contain only entries of the matrix $B$ only need to be calculated once and can therefore be reused in the matrix multiplication. In the special case $n=3$ we obtain an algorithm for multiplying two $3 \times 3$ matrices using $21$ multiplications which improves the best result so far from Makarov using $22$ multiplications \cite{makarov:matrmult}. Our next step is to generalize this result to the computation of the product of an $l \times n$ matrix $A$ and an $n \times m$ matrix $B$ over a commutative ring for odd $n$. We show that the product $AB$ can be computed using $n(lm + l + m - 1)/2$ multiplications if $m$ is odd and $(n(lm +l + m - 1) + l - 1)/2$ multiplications if $m$ is even. This improves Waksman's and Islam's algorithm which require $(n-1)(lm + l + m - 1)/2 + lm$ multiplications for odd $n$ \cite{islam, waksman}. Additionally we present an algorithm to multiply an $l \times n$ matrix and an $n \times m$ matrix using $n(lm + l + m - 1)/2$ multiplications without making use of divisions, whereas Waksman's and Islam's algorithm makes use of some divisions by $2$ \cite{islam, waksman}.\\

Finally we present a novelty for matrix multiplication. We can show that a non-bilinear algorithm for matrix multiplication can be applied recursively under special circumstances. For this we first generalize the term of bilinear algorithms. Bilinear algorithms have the property that every multiplication is in bilinear form. For the computation of a matrix product $AB$ this means that in every product that is computed the left (resp., right) factor is
a sum of entries from $A$ (resp., $B$). The consequence is that the linear combinations of the products are also in bilinear form. We have found algorithms whose multiplications are not in bilinear form but the linear combinations of the products are in bilinear form. Such an algorithm can be seen as a generalization of bilinear algorithms if one defines a bilinear algorithm independent from the form of the multiplications as an algorithm whose linear combinations are in bilinear form. But in order to distinguish such algorithms from usual bilinear algorithms we call such an algorithm non-bilinear. Of course a non-bilinear algorithm can be applied recursively. Note that by this definition every bilinear algorithm is also a non-bilinear algorithm. We have found formulas for the computation of the product of two $n \times n$ matrices. For even $n$ the algorithm needs $n(n^2+3n+1)/2$ multiplications and $n(n^2+3n+2)/2$ multiplications are required for odd $n$. In the case $n=14$ this gives an algorithm using $1673$ multiplications and an exponent of roughly $2.8125$.

Additionally we present approximate non-bilinear algorithm. Approximate algorithms for matrix multiplication were introduced for the first time by Bini et al. \cite{bini}. We give a general formula for the computation of the product of two $n \times n$ matrices. For even $n$ the algorithm needs $n(n^2+2n)/2$ multiplications and $n(n^2+2n+1)/2$ multiplications are required for odd $n$. In the case $n=5$ we were even able to find an approximate non-bilinear algorithm that uses only $89$ multiplications.

\subsection{Related Work}
In this section we present some related work. We start with presenting some results about multiplication of two square matrices. Since Strassen showed in 1969 that the product of two matrices can be computed using $O(n^{\log_2 (7)})$ arithmetic operations \cite{strassen:volker} and since it is shown that for $2 \times 2$ matrices $7$ is the optimal number of multiplications \cite{hopcroft:kerr:min, winograd}, it is interesting to study $n \times n$ matrices for $n \geq 3$, to obtain an even faster algorithm for matrix multiplication. For $3 \times 3$ matrices $21$ multiplications would be needed to obtain an even faster algorithm than Strassen's since $\log_3 (21) \approx 2.7712 < 2.807 \approx \log_2 (7)$. In 1976 Laderman obtained a non-commutative $3 \times 3$ algorithm using only $23$ multiplications \cite{lader:man}. It is not known if there exists a non-commutative $3 \times 3$ algorithm that uses $22$ or less multiplications.

Hopcroft and Musinski showed in \cite{hopcroft:musinski} that the number of multiplications of non-commutative algorithms to compute the product of an $l \times n$ matrix and an $n \times m$ matrix is the same number that is required to compute the product of an $n \times m$ matrix and an $m \times l$ matrix and of an $l \times m$ matrix and an $m \times n$ matrix etc. This means if one finds an algorithm for the product of an $l \times m$ matrix and an $m \times n$ matrix using $x$ multiplications there exists a matrix multiplication algorithm for $lnm \times lnm$ matrices using $x^3$ multiplications overall. This algorithm for square matrices will then have an exponent of $\log_{lnm} (x^3)$.

We present some examples of non-square matrix multiplication algorithms. In \cite{hopcroft:kerr} Hopcroft and Kerr showed that the product of a $p \times 2$ matrix and a $2 \times n$ matrix can be multiplied using $\lceil (3pn + max\{n, p\})/2 \rceil$ multiplications without using commutativity. In the case $p=3=n$ this gives an algorithm using $15$ multiplications. Combined with the results of \cite{hopcroft:musinski} this gives an algorithm for $18 \times 18$ matrices using $15^3 = 3375$ multiplications and an exponent of $\log_{18} (3375) \approx 2.811$. Smirnov obtained an algorithm for the product of a $3 \times 3$ matrix and a $3 \times 6$ matrix using $40$ multiplications\cite{smirnov}. By \cite{hopcroft:musinski} this gives an algorithm for $54 \times 54$ matrices using $40^3 = 64000$ multiplications and an exponent of $\log_{54} (64000) \approx 2.7743$.\\

Cariow et al. developed a high-speed parallel $3 \times 3$ matrix multiplier structure based on the commutative $3 \times 3$ matrix algorithm using $22$ multiplications obtained by Makarov \cite{hardware, makarov:matrmult}. We suppose that the structure could be improved by using our commutative $3 \times 3$ matrix algorithm using $21$ multiplications.

In \cite{optimization:small:matrices} Drevet et al. optimized the number of required multiplications of small matrices up to $30 \times 30$ matrices. They considered non-commutative and commutative algorithms. Combined with our results for commutative rings we suppose that some results could be improved.

\section{Matrix Multiplication over a Commutative Ring}
Let $R$ denote a commutative ring throughout this Section.
\subsection{Product of $n \times 3$ and $3 \times 3$ Matrices}
Consider the product of an $1 \times 3$ vector $a$ and a $3 \times 3$ matrix $B$ over a commutative ring.
\begin{equation}\label{foralgorithm1}
a = \begin{bmatrix}
a_{1} & a_{2} & a_{3}
\end{bmatrix}
\textbf{~~~~~~~~~~}
B = \begin{bmatrix}
b_{11} & b_{12} & b_{13}\\
b_{21} & b_{22} & b_{23}\\
b_{31} & b_{32} & b_{33}
\end{bmatrix}
\end{equation}
In the usual way the product of $a$ and $B$ would be computed as:
\begin{displaymath}
aB = \begin{bmatrix}
a_1b_{11} + a_2b_{21} + a_3b_{31} &~& a_1b_{12} + a_2b_{22} + a_3b_{32} &~& a_1b_{13} + a_2b_{23} + a_3b_{33}
\end{bmatrix}
\end{displaymath}
But it can also be computed by first computing these $9$ products:
\begin{algorithm}\label{matrix3x3}
Input: Vector $a$ and Matrix $B$ as in \eqref{foralgorithm1}.\\
Let
\begin{align*}
p_1&:=(a_2 + b_{12})(a_1 + b_{21})\\
p_2&:=(a_3 + b_{13})(a_1 + b_{31})\\
p_3&:=(a_3 + b_{23})(a_2 + b_{32})\\
p_4&:=a_1(b_{11} - b_{12} - b_{13} - a_2 - a_3)\\
p_5&:=a_2(b_{22} - b_{21} - b_{23} - a_1 - a_3)\\
p_6&:=a_3(b_{33} - b_{31} - b_{32} - a_1 - a_2)\\
p_7&:=b_{12}b_{21}\\
p_8&:=b_{13}b_{31}\\
p_9&:=b_{23}b_{32}
\end{align*}
Output:
\begin{displaymath}
aB = \begin{bmatrix}
p_4 + p_1 + p_2 - p_7 - p_8 &~& p_5 + p_1 + p_3 - p_7 - p_9 &~& p_6 + p_2 + p_3 - p_8 - p_9
\end{bmatrix}
\end{displaymath}
\end{algorithm}
\begin{theorem}
Let $R$ be a commutative ring, let $n \geq 1$, let $A$ be an $n \times 3$ matrix over $R$ and let $B$ be a $3 \times 3$ matrix over $R$. Then the product $AB$ can be computed using $6n+3$ multiplications.
\end{theorem}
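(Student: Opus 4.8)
The plan is to first verify that Algorithm~\ref{matrix3x3} correctly computes the vector--matrix product $aB$, and then to observe that exactly three of its nine multiplications depend on $B$ alone, so that these can be shared across all $n$ rows of $A$.

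For the first step I would expand each of $p_1,\dots,p_9$ as a polynomial in the entries $a_i,b_{jk}$ and substitute into the claimed output. For the first output entry one obtains $p_4+p_1+p_2-p_7-p_8$; after expansion the cross terms $a_1b_{12}$, $a_1b_{13}$, $a_1a_2$, $a_1a_3$, $b_{12}b_{21}$, $b_{13}b_{31}$ all cancel in pairs, leaving precisely $a_1b_{11}+a_2b_{21}+a_3b_{31}$, the first entry of $aB$. The other two entries require no separate computation: the defining expressions for $p_1,\dots,p_9$ and for the output are invariant under the cyclic relabelling $1\mapsto 2\mapsto 3\mapsto 1$ of the indices (with $p_1\mapsto p_3\mapsto p_2\mapsto p_1$ and $p_7\mapsto p_9\mapsto p_8\mapsto p_7$), so correctness of the first entry forces correctness of the others.

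For the second step, write $A$ as a stack of rows $a^{(1)},\dots,a^{(n)}$, so that the $i$-th row of $AB$ equals $a^{(i)}B$. Applying Algorithm~\ref{matrix3x3} to the pair $(a^{(i)},B)$ produces products $p_1^{(i)},\dots,p_6^{(i)}$ — six multiplications, each genuinely involving an entry of the row $a^{(i)}$ — together with $p_7=b_{12}b_{21}$, $p_8=b_{13}b_{31}$, $p_9=b_{23}b_{32}$, which do not depend on $i$. Hence one computes $p_7,p_8,p_9$ a single time (three multiplications) and reuses them for every row, spending six further multiplications per row. This yields $6n+3$ multiplications in total, and by the first step the resulting values assemble into $AB$.

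The only substantive work is the polynomial identity for Algorithm~\ref{matrix3x3}, and even there the cyclic symmetry removes most of the effort after the first entry is checked; the counting argument is then immediate. I would expect the one point that needs to be stated with care to be the bookkeeping of which products are ``$B$-only,'' since that is exactly what turns an apparent $9n$ into $6n+3$.
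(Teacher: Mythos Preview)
Your proposal is correct and follows essentially the same approach as the paper: apply Algorithm~\ref{matrix3x3} row by row and observe that $p_7,p_8,p_9$ depend only on $B$, so they are computed once and reused, giving $6n+3$ multiplications. Your verification of Algorithm~\ref{matrix3x3} via the cyclic relabelling $1\mapsto 2\mapsto 3\mapsto 1$ is a pleasant addition that the paper leaves implicit, but the underlying argument is the same.
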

\begin{proof}
Consider Algorithm \ref{matrix3x3}. The products $p_7$, $p_8$ and $p_9$ contain only entries of the matrix $B$. One can observe that for all $n \geq 1$ the products $p_7$, $p_8$ and $p_9$ can be reused for the product $AB$ and therefore $3(n-1)$ multiplications are saved.
\end{proof}
We give an example. In the case $n=3$ we obtain an algorithm with $21$ multiplications for the matrix multiplication. This algorithm needs one multiplication less than Makarov's \cite{makarov:matrmult}.
\begin{corollary}
Let $R$ be a commutative ring and let $A$ and $B$ be $3 \times 3$ matrices over $R$ as shown below. Then the product $AB$ can be computed using $21$ multiplications as follows:
\begin{align*}
A = \begin{bmatrix}
a_{11} & a_{12} & a_{13}\\
a_{21} & a_{22} & a_{23}\\
a_{31} & a_{32} & a_{33}
\end{bmatrix}
&&
B = \begin{bmatrix}
b_{11} & b_{12} & b_{13}\\
b_{21} & b_{22} & b_{23}\\
b_{31} & b_{32} & b_{33}
\end{bmatrix}
\end{align*}
\begin{align*}
p_1&:=(a_{12} + b_{12})(a_{11} + b_{21})\\
p_2&:=(a_{13} + b_{13})(a_{11} + b_{31})\\
p_3&:=(a_{13} + b_{23})(a_{12} + b_{32})\\
p_4&:=a_{11}(b_{11} - b_{12} - b_{13} - a_{12} - a_{13})\\
p_5&:=a_{12}(b_{22} - b_{21} - b_{23} - a_{11} - a_{13})\\
p_6&:=a_{13}(b_{33} - b_{31} - b_{32} - a_{11} - a_{12})
\end{align*}
\begin{align*}
p_7&:=(a_{22} + b_{12})(a_{21} + b_{21})\\
p_8&:=(a_{23} + b_{13})(a_{21} + b_{31})\\
p_9&:=(a_{23} + b_{23})(a_{22} + b_{32})\\
p_{10}&:=a_{21}(b_{11} - b_{12} - b_{13} - a_{22} - a_{23})\\
p_{11}&:=a_{22}(b_{22} - b_{21} - b_{23} - a_{21} - a_{23})\\
p_{12}&:=a_{23}(b_{33} - b_{31} - b_{32} - a_{21} - a_{22})
\end{align*}
\begin{align*}
p_{13}&:=(a_{32} + b_{12})(a_{31} + b_{21})\\
p_{14}&:=(a_{33} + b_{13})(a_{31} + b_{31})\\
p_{15}&:=(a_{33} + b_{23})(a_{32} + b_{32})\\
p_{16}&:=a_{31}(b_{11} - b_{12} - b_{13} - a_{32} - a_{33})\\
p_{17}&:=a_{32}(b_{22} - b_{21} - b_{23} - a_{31} - a_{33})\\
p_{18}&:=a_{33}(b_{33} - b_{31} - b_{32} - a_{31} - a_{32})\\\\
p_{19}&:=b_{12}b_{21}\\
p_{20}&:=b_{13}b_{31}\\
p_{21}&:=b_{23}b_{32}
\end{align*}
Hence,
\begin{displaymath}
AB = \begin{bmatrix}
p_4 + p_1 + p_2 - p_{19} - p_{20} &~& p_5 + p_1 + p_3 - p_{19} - p_{21} &~& p_6 + p_2 + p_3 - p_{20} - p_{21}\\
p_{10} + p_7 + p_8 - p_{19} - p_{20} &~& p_{11} + p_7 + p_9 - p_{19} - p_{21} &~& p_{12} + p_8 + p_9 - p_{20} - p_{21}\\
p_{16} + p_{13} + p_{14} - p_{19} - p_{20} &~& p_{17} + p_{13} + p_{15} - p_{19} - p_{21} &~& p_{18} + p_{14} + p_{15} - p_{20} - p_{21}
\end{bmatrix}
\end{displaymath}
\end{corollary}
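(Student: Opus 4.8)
The plan is to verify the corollary as a direct instantiation of Algorithm \ref{matrix3x3} applied row-by-row to $A$, together with the observation already used in the proof of the Theorem that the three products involving only entries of $B$ need be computed only once. First I would note that the rows of $A$ are the vectors $a^{(1)} = (a_{11}, a_{12}, a_{13})$, $a^{(2)} = (a_{21}, a_{22}, a_{23})$, $a^{(3)} = (a_{31}, a_{32}, a_{33})$, and that the $i$-th row of the product $AB$ is exactly $a^{(i)} B$. Applying Algorithm \ref{matrix3x3} to $a^{(1)}$ produces the six products $p_1, \dots, p_6$ together with $b_{12}b_{21}$, $b_{13}b_{31}$, $b_{23}b_{32}$; applying it to $a^{(2)}$ and $a^{(3)}$ produces $p_7, \dots, p_{12}$ and $p_{13}, \dots, p_{18}$ respectively, each again accompanied by the same three $B$-only products. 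Since $p_{19} = b_{12}b_{21}$, $p_{20} = b_{13}b_{31}$, $p_{21} = b_{23}b_{32}$ are literally the products $p_7, p_8, p_9$ of Algorithm \ref{matrix3x3}, they are shared across all three rows, so the total multiplication count is $3 \cdot 6 + 3 = 21$ rather than $3 \cdot 9 = 27$.

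Next I would check that the output formula matches. By the correctness of Algorithm \ref{matrix3x3} (which we are entitled to assume, as it is asserted in the excerpt), for each row $i$ the three entries of $a^{(i)} B$ are
\begin{displaymath}
\bigl[\, p^{(i)}_4 + p^{(i)}_1 + p^{(i)}_2 - p_7^B - p_8^B \ \big|\ p^{(i)}_5 + p^{(i)}_1 + p^{(i)}_3 - p_7^B - p_9^B \ \big|\ p^{(i)}_6 + p^{(i)}_2 + p^{(i)}_3 - p_8^B - p_9^B \,\bigr],
\end{displaymath}
where $p^{(i)}_j$ denotes the $j$-th product of the algorithm run on row $i$ and $p_7^B, p_8^B, p_9^B$ the three $B$-only products. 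Substituting the relabeling $p^{(1)}_j = p_j$, $p^{(2)}_j = p_{j+6}$, $p^{(3)}_j = p_{j+12}$ and $p_7^B = p_{19}$, $p_8^B = p_{20}$, $p_9^B = p_{21}$ reproduces the displayed matrix in the corollary line for line. This is essentially bookkeeping, so I would present it compactly rather than expanding all nine entries.

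If one wished to make the writeup fully self-contained rather than leaning on Algorithm \ref{matrix3x3}, the one genuine computation is to expand, say, the first entry $p_1 + p_2 + p_4 - p_7^B - p_8^B$ for a single generic row $(a_1, a_2, a_3)$ and confirm it equals $a_1 b_{11} + a_2 b_{21} + a_3 b_{31}$: the cross terms $a_1 a_2$, $a_1 a_3$, $a_2 b_{12}$, $a_1 b_{21}$, $a_3 b_{13}$, $a_1 b_{31}$ generated by $p_1$ and $p_2$ are cancelled by the corresponding terms inside $p_4 = a_1(b_{11} - b_{12} - b_{13} - a_2 - a_3)$, while $b_{12}b_{21}$ and $b_{13}b_{31}$ are removed by $-p_7^B - p_8^B$, leaving precisely the desired linear combination; the other two entries follow by the evident cyclic symmetry in the index pairs $\{1,2\}, \{2,3\}, \{1,3\}$. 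The main (and only real) obstacle is therefore purely clerical: keeping the three index shifts straight and making sure the shared $B$-products are counted once; there is no conceptual difficulty, since correctness has already been established at the level of a single vector-matrix product and the corollary is just three independent copies of that identity sharing three common subexpressions.
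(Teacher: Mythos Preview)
Your approach is correct and is exactly what the paper does: the corollary is presented there simply as the $n=3$ instance of the preceding theorem, i.e., three applications of Algorithm~\ref{matrix3x3} with the three $B$-only products $p_{19},p_{20},p_{21}$ shared across rows, and no separate proof is given beyond displaying the formulas. One small slip in your optional self-contained verification: expanding $p_1=(a_2+b_{12})(a_1+b_{21})$ and $p_2=(a_3+b_{13})(a_1+b_{31})$ gives mixed terms $a_1b_{12},\,a_2b_{21},\,a_1b_{13},\,a_3b_{31}$ (your indices are swapped), and only the four terms $a_1a_2,\,a_1a_3,\,a_1b_{12},\,a_1b_{13}$ are cancelled by $p_4=a_1(b_{11}-b_{12}-b_{13}-a_2-a_3)$, while $a_2b_{21}$ and $a_3b_{31}$ survive as part of the desired entry---but your conclusion is correct.
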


\subsection{General Matrix Multiplication}
We start this Section with presenting an algorithm for the product of $l \times n$ and $n \times m$ matrices over a commutative ring for even $n$ using $n(lm + l + m - 1)/2$ multiplications. In the literature there is already an algorithm for such matrices with $n(lm + l + m - 1)/2$ multiplications \cite{islam, waksman}, but this algorithm makes use of some divisions by $2$. In this Section we present an algorithm that does not use divisions.
\begin{theorem}\label{commutativ_even_n}
Let $R$ be a commutative ring, let $n \geq 2$ be even and let $l,m \geq 1$. Furthermore let $A \in R^{l \times n}$, $B \in R^{n \times m}$ be matrices. Then the product $AB$ can be computed using $n(lm + l + m - 1)/2$ multiplications without making use of divisions.
\end{theorem}
\begin{proof}
Let $A$ and $B$ be matrices as in the Theorem. Let $a_{ij}$ denote the entries of $A$ and let $b_{ij}$ denote the entries of $B$ and let $c_{ij}$ denote the entries of $AB$. The product $AB$ can be computed as follows. For $i = 1,\dots,l$ let
\begin{displaymath}
c_{i1} = \sum_{k=1}^{n/2} a_{i(2k-1)}(b_{(2k-1)1} + a_{i(2k)}) + \sum_{k=1}^{n/2} a_{i(2k)}(b_{(2k)1} - a_{i(2k-1)})
\end{displaymath}
and for $i = 1,\dots,l$ and $j=2,\dots,m$ let
\begin{displaymath}
c_{ij} = \sum_{k=1}^{n/2} (a_{i(2k-1)} + b_{(2k)j})(a_{i(2k)} + b_{(2k-1)1} + b_{(2k-1)j}) - \sum_{k=1}^{n/2} a_{i(2k-1)}(b_{(2k-1)1} + a_{i(2k)})
\end{displaymath}
\begin{displaymath}
-\sum_{k=1}^{n/2} b_{(2k)j}(b_{(2k-1)1} + b_{(2k-1)j})
\end{displaymath}
These formulas can easily be verified. It remains to show that only $n(lm + l + m - 1)/2$ multiplications are required by these formulas. To show the number of multiplications we count the multiplications. Since the index $i$ takes $l$ different values, the index $j$ takes $m-1$ different values and the index $k$ takes $n/2$ different values there are $ln/2$ multiplications of the form $a_{i(2k-1)}(b_{(2k-1)1} + a_{i(2k)})$, $ln/2$ multiplications of the form $a_{i(2k)}(b_{(2k)1} - a_{i(2k-1)})$, $n(m-1)/2$ multiplications of the form $b_{(2k)j}(b_{(2k-1)1} + b_{(2k-1)j})$ and $ln(m-1)/2$ multiplications of the form $(a_{i(2k-1)} + b_{(2k)j})(a_{i(2k)} + b_{(2k-1)1} + b_{(2k-1)j})$. Thus, overall $ln/2 + ln/2 + n(m-1)/2 + ln(m-1)/2 = n(lm + l + m - 1)/2$ multiplications are required.
\end{proof}

Algorithm \ref{matrix3x3} from Section $2.1$ is the basic idea of a general algorithm for the matrix product of $l \times n$ and $n \times m$ matrices over a commutative ring for odd $n$. The algorithm we present below is split into two cases. In Case $1$ $m$ is odd and in Case $2$ $m$ is even. This leads us to the following:
\begin{theorem}
Let $R$ be a commutative ring, let $n \geq 3$ be odd, $l \geq 1$, $m \geq 3$ and let $A \in R^{l \times n}$, $B \in R^{n \times m}$ be matrices. Then the following holds:
\begin{itemize}
\item If $m$ is odd the product $AB$ can be computed using $n(lm + l + m - 1)/2$ multiplications.
\item If $m$ is even the product $AB$ can be computed using $(n(lm +l + m - 1) + l - 1)/2$ multiplications.
\end{itemize}
\end{theorem}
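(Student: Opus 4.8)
The plan is to generalize Algorithm~\ref{matrix3x3} from the $1 \times 3$ case to the $1 \times n$ case for odd $n$, and then stack it $l$ times to obtain the $l \times n$ matrix case, reusing all the $B$-only products across the $l$ rows. First I would fix a single row $a = (a_1, \dots, a_n)$ of $A$ and analyze the vector-matrix product $aB$. The key observation from Algorithm~\ref{matrix3x3} is that one can pair up the $n$ indices: with $n$ odd, form $(n-1)/2$ pairs $\{i, j\}$ and one leftover index. For each pair $\{i,j\}$ and each column $k$ of $B$, a Waksman-style identity $(a_i + b_{jk})(a_j + b_{ik}) = a_i a_j + a_i b_{ik} + a_j b_{jk} + b_{ik} b_{jk}$ lets us recover the cross terms $a_i b_{ik} + a_j b_{jk}$ at the cost of one product per pair-column combination plus the correction terms $a_i a_j$ (one product, independent of $k$) and $b_{ik}b_{jk}$ (one product per column, and crucially independent of the row $a$). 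The leftover index contributes a single product per column of the form $a_{\text{odd}} \cdot (\text{linear combination of } B\text{-entries and other } a\text{'s})$, exactly as $p_4, p_5, p_6$ do in Algorithm~\ref{matrix3x3}; the extra $a_i a_j$ terms from all pairs are folded into these leftover products so they cost nothing extra.

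Next I would count the multiplications for one row. The pair-column products number $\frac{n-1}{2} \cdot m$; the leftover-index products number $m$ (one per column); and the $a_i a_j$ pairing products number $\frac{n-1}{2}$. This gives $\frac{(n-1)m}{2} + m + \frac{n-1}{2} = \frac{(n-1)(m+1)}{2} + m$ products that involve an entry of $a$. The $B$-only products $b_{ik}b_{jk}$ number $\frac{n-1}{2} \cdot m$, but these are computed once and shared across all $l$ rows. Therefore the total for $AB$ is
\begin{displaymath}
l\left(\frac{(n-1)(m+1)}{2} + m\right) + \frac{(n-1)m}{2}.
\end{displaymath}
A short algebraic simplification should turn this into $n(lm + l + m - 1)/2$, matching the claimed bound for odd $m$. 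The reason the parity of $m$ enters is that when $m$ is odd one can additionally pair up the columns of $B$ on the left side (symmetric to the row pairing) to shave off further products, whereas when $m$ is even one column is left unpaired, costing an extra $(l-1)/2$ — wait, rather $l-1$ halved — I would need to track this carefully: the even-$m$ formula $(n(lm+l+m-1)+l-1)/2$ exceeds the odd-$m$ formula by exactly $(l-1)/2$, so when $m$ is even we instead invoke Waksman's algorithm on the even dimension or handle the last column separately, and I would reconcile the two sub-cases (Case~1 and Case~2 as the excerpt announces) with this accounting.

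The main obstacle I anticipate is the bookkeeping for the leftover index (and, when $m$ is even, the leftover column): one must verify that all the "junk" terms — the $a_i a_j$ products from every pair, and any $b b$ cross terms — can genuinely be absorbed into the single product associated with the leftover index without requiring an extra multiplication, exactly as Algorithm~\ref{matrix3x3} absorbs $-a_2 - a_3$ into $p_4$. Checking that the output formula for each entry $(AB)_{rk}$ is correct then amounts to expanding the pair products, the leftover products, and subtracting the shared $bb$ products, and confirming the $a a$ and $b b$ terms cancel identically — this is routine but lengthy, so I would present it as a verification of one representative entry and note that the rest follow by symmetry of the index pairing. A secondary subtlety is the base case and small dimensions ($n = 3$, $m = 3$): the formula should reduce to $6l + 3$, consistent with Theorem~1 and its Corollary, which serves as a useful sanity check on the algebra.
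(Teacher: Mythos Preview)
Your scheme has a genuine gap: what you describe is essentially Waksman's algorithm for odd $n$ (pair the inner indices into $(n-1)/2$ pairs, one leftover), and that does \emph{not} reach the claimed bound. The step where you assert that the junk terms $\sum_{\text{pairs}} a_i a_j$ ``are folded into these leftover products so they cost nothing extra'' is false. With a fixed leftover index $n$ and pairs $(1,2),(3,4),\ldots,(n-2,n-1)$, the junk $a_1a_2+a_3a_4+\cdots$ contains no factor $a_n$, so it cannot be absorbed into a single product of the form $a_n\cdot(\text{linear})$. This is a misreading of Algorithm~\ref{matrix3x3}: there, each column $k$ has its \emph{own} distinguished index (namely $k$), and the two cross products used for column $k$ are exactly the pairs $\{k,j\}$ with $j\ne k$, so the $a$-junk is $a_k(\sum_{j\ne k}a_j)$ and \emph{does} fold into $p_{3+k}$; moreover each cross product $p_1,p_2,p_3$ serves two different columns simultaneously. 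Neither feature survives your ``one global leftover index'' generalisation. As a sanity check, your own displayed total $l\bigl(\tfrac{(n-1)(m+1)}{2}+m\bigr)+\tfrac{(n-1)m}{2}$ does \emph{not} simplify to $n(lm+l+m-1)/2$; at $l=m=n=3$ it gives $24$, not $21$, and in general it exceeds the target by $\tfrac{(l-1)(m-1)+n-1}{2}>0$.

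The paper's proof takes a completely different route from your proposal. It does not generalise Algorithm~\ref{matrix3x3} in the $n$-direction at all; instead it splits $A=[A_1\,|\,A_2]$ and $B=[B_1;B_2]$ with inner dimensions $3$ and $n-3$, applies Waksman directly to $A_2B_2$ (this is where the even inner dimension $n-3$ is exploited), and spends all the new work on the $l\times 3$ by $3\times m$ block $A_1B_1$. For that block the first three columns use Algorithm~\ref{matrix3x3} verbatim, and columns $j\ge 4$ are handled two at a time by a cascading identity that \emph{reuses} the three mixed products $(a_{i1}+b_{21})(a_{i2}+b_{12})$, $(a_{i1}+b_{31})(a_{i3}+b_{13})$, $(a_{i2}+b_{32})(a_{i3}+b_{23})$ from the first block and introduces only three new row-dependent products and three new $B$-only products per column pair. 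The parity of $m$ enters not through any column pairing on the left, but simply because when $m$ is even one column (column $4$) is left unpaired in this cascade and needs an ad hoc formula costing $2l+1$ extra multiplications; that is where the additive $(l-1)/2$ comes from.
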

\begin{proof}
Let $A$ and $B$ be matrices as in the Theorem. Now split $A$ and $B$ in submatrices in the following way:\\\\
$A = \begin{bmatrix}
A_1 & A_2
\end{bmatrix}$,
with $A_1 \in R^{l \times 3}$ and $A_2 \in R^{l \times n-3}$,\\\\
$B = \begin{bmatrix}
B_1\\
B_2
\end{bmatrix}$,
with $B_1 \in R^{3 \times m}$ and $B_2 \in R^{n-3 \times m}$.

Then $AB = A_1B_1 + A_2B_2$. By using Theorem \ref{commutativ_even_n} the product $A_2B_2$ can be computed using $(n-3)(lm + l + m -1)/2$ multiplications. Let $a_{ij}$ denote the entries of $A_1$ and let $b_{ij}$ denote the entries of $B_1$ and let $c_{ij}$ denote the entries of $A_1B_1$. The matrix $A_1B_1$ can be computed as follows.\\\\
\underline{Case 1}: $m$ is odd.\\
For $i=1,\dots,l$ let\\\\
$c_{i1} = (a_{i1} + b_{21})(a_{i2} + b_{12}) + (a_{i1} + b_{31})(a_{i3} + b_{13}) + a_{i1}(b_{11} - b_{12} - b_{13} - a_{i2} - a_{i3}) - b_{12}b_{21} - b_{13}b_{31}$\\\\
$c_{i2} = (a_{i1} + b_{21})(a_{i2} + b_{12}) + (a_{i2} + b_{32})(a_{i3} + b_{23}) + a_{i2}(b_{22} - b_{21} - b_{23} - a_{i1} - a_{i3}) - b_{12}b_{21} - b_{23}b_{32}$\\\\
$c_{i3} = (a_{i1} + b_{31})(a_{i3} + b_{13}) + (a_{i2} + b_{32})(a_{i3} + b_{23}) + a_{i3}(b_{33} - b_{31} - b_{32} - a_{i1} - a_{i2}) - b_{13}b_{31} - b_{23}b_{32}$\\\\
and for $i=1,\dots,l$ and $j=4,6,8,\dots,m-1$ let\\\\
$c_{ij} = (a_{i1} + b_{21})(a_{i2} + b_{12}) + (a_{i1} + b_{31})(a_{i3} + b_{13}) + (a_{i1} + b_{21} - b_{2j})(-a_{i2} - b_{12} + b_{1j} - b_{1(j+1)})$\\
$+ (a_{i1} + b_{31} - b_{3j})(-a_{i3} - b_{13} + b_{1(j+1)}) - b_{12}b_{21} - b_{13}b_{31} - (b_{21} - b_{2j})(-b_{12} + b_{1j} - b_{1(j+1)})$\\
$- (b_{31} - b_{3j})(-b_{13} + b_{1(j+1)})$\\\\
$c_{i(j+1)} = (a_{i1} + b_{31})(a_{i3} + b_{13}) + (a_{i2} + b_{32})(a_{i3} + b_{23}) + (a_{i1} + b_{31} - b_{3j})(-a_{i3} - b_{13} + b_{1(j+1)})$\\
$+ (a_{i2} + b_{32} + b_{3j} - b_{3(j+1)})(-a_{i3} - b_{23} + b_{2(j+1)}) - b_{13}b_{31} - b_{23}b_{32} - (b_{31} - b_{3j})(-b_{13} + b_{1(j+1)})$\\
$- (b_{32} + b_{3j} - b_{3(j+1)})(-b_{23} + b_{2(j+1)})$\\

It can easily be seen that $6l + 3 + 3l(m-3)/2 + 3(m-3)/2 = 3(lm + l + m - 1)/2$ multiplications are required to compute $A_1B_1$.

Thus, $AB$ can be computed using $3(lm + l + m - 1)/2 + (n-3)(lm + l + m -1)/2 = n(lm + l + m -1)/2$ multiplications.\\\\
\underline{Case 2}: $m$ is even.\\
For $i=1,\dots,l$ let\\\\
$c_{i1} = (a_{i1} + b_{21})(a_{i2} + b_{12}) + (a_{i1} + b_{31})(a_{i3} + b_{13}) + a_{i1}(b_{11} - b_{12} - b_{13} - a_{i2} - a_{i3}) - b_{12}b_{21} - b_{13}b_{31}$\\\\
$c_{i2} = (a_{i1} + b_{21})(a_{i2} + b_{12}) + (a_{i2} + b_{32})(a_{i3} + b_{23}) + a_{i2}(b_{22} - b_{21} - b_{23} - a_{i1} - a_{i3}) - b_{12}b_{21} - b_{23}b_{32}$\\\\
$c_{i3} = (a_{i1} + b_{31})(a_{i3} + b_{13}) + (a_{i2} + b_{32})(a_{i3} + b_{23}) + a_{i3}(b_{33} - b_{31} - b_{32} - a_{i1} - a_{i2}) - b_{13}b_{31} - b_{23}b_{32}$\\\\
$c_{i4} = (a_{i1} + b_{21})(a_{i2} + b_{12}) + (a_{i1} + b_{21} - b_{24})(-a_{i2} - b_{12} + b_{14}) + a_{i3}b_{34} - b_{12}b_{21} - (b_{21} - b_{24})(-b_{12} + b_{14})$\\\\
and for $i=1,\dots,l$ and $j=5,7,9,\dots,m-1$ let\\\\
$c_{ij} = (a_{i1} + b_{21})(a_{i2} + b_{12}) + (a_{i1} + b_{31})(a_{i3} + b_{13}) + (a_{i1} + b_{21} - b_{2j})(-a_{i2} - b_{12} + b_{1j} - b_{1(j+1)})$\\
$+ (a_{i1} + b_{31} - b_{3j})(-a_{i3} - b_{13} + b_{1(j+1)}) - b_{12}b_{21} - b_{13}b_{31} - (b_{21} - b_{2j})(-b_{12} + b_{1j} - b_{1(j+1)})$\\
$- (b_{31} - b_{3j})(-b_{13} + b_{1(j+1)})$\\\\
$c_{i(j+1)} = (a_{i1} + b_{31})(a_{i3} + b_{13}) + (a_{i2} + b_{32})(a_{i3} + b_{23}) + (a_{i1} + b_{31} - b_{3j})(-a_{i3} - b_{13} + b_{1(j+1)})$\\
$+ (a_{i2} + b_{32} + b_{3j} - b_{3(j+1)})(-a_{i3} - b_{23} + b_{2(j+1)}) - b_{13}b_{31} - b_{23}b_{32} - (b_{31} - b_{3j})(-b_{13} + b_{1(j+1)})$\\
$- (b_{32} + b_{3j} - b_{3(j+1)})(-b_{23} + b_{2(j+1)})$\\

One can easily verify that in this case $8l + 4 + 3l(m-4)/2 + 3(m-4)/2 = 2(l - 1) + 3(lm + m)/2$ multiplications are required to compute $A_1B_1$.

Thus, $AB$ can be computed using $2(l - 1) + 3(lm + m)/2 + (n-3)(lm + l + m -1)/2 = (n(lm +l + m - 1) + l - 1)/2$ multiplications.
\end{proof}
In both cases less multiplications are required to compute $AB$ than Waksman's and Islam's algorithm \cite{islam, waksman} for odd $n$ requires.

\section{Non-Bilinear Algorithms}
Surprisingly it is indeed possible to use a non-bilinear algorithm for matrix multiplication as a recursive algorithm. The crucial point is the form of the results of the linear combinations of the multiplications. The results need to be in bilinear form. Another point is that the identity $AB = BA$ for two $n \times n$ matrices $A$ and $B$ in general does not hold. But for matrices over a commutative ring at least $AB = (B^TA^T)^T$ holds and we will use this fact for our next algorithms.

In this Section we will consider an arbitrary (not necessarily commutative) ring $R$ on which we define an involution $f:R \rightarrow R$ defined by $A \mapsto A^T$ with the following axioms:
\begin{equation}\label{involution}
(AB)^T = B^TA^T
\end{equation}
\begin{displaymath}
(A+B)^T = A^T + B^T
\end{displaymath}

In the sense of matrix multiplication $A^T$ denotes the transpose of a matrix $A$. Note that the following algorithms can be applied recursively if we choose $R$ to be the ring of all square matrices over a commutative ring. Then the entries $A_{ij}, B_{ij}$ and $C_{ij}$ are submatrices.

\begin{theorem}\label{commutative_recursive_1}
Let $R$ be a ring satisfying \eqref{involution}, let $n \geq 2$ be even. Furthermore let $A$ and $B$ be $n \times n$ matrices over $R$. Then there is a non-bilinear algorithm using $n(n^2+3n+1)/2$ multiplications for computing the product $AB$.
\end{theorem}
\begin{proof}
Let $A$ and $B$ be matrices as in the Theorem. Let $A_{ij}$ denote the entries of $A$ and let $B_{ij}$ denote the entries of $B$ and let $C_{ij}$ denote the entries of $AB$. The product $AB$ can be computed as follows: For $i = 1,3,5,\dots,n-1$ and $j=1,\dots,n$ let
\begin{align*}
C_{ij}&= \sum_{k=1}^n (A_{ik} + B_{jk}^T)(A_{(i+1)j}^T + B_{kj}) - (\sum_{k=1}^n A_{ik} + \sum_{k=1}^n B_{jk}^T)A_{(i+1)j}^T - \sum_{k=1}^n B_{jk}^TB_{kj}\\
C_{(i+1)j}&= \sum_{k=1}^n (A_{(i+1)k} + B_{jk}^T)(A_{ij}^T + B_{kj}) - (\sum_{k=1}^n A_{(i+1)k} + \sum_{k=1}^n B_{jk}^T)A_{ij}^T - \sum_{k=1}^n B_{jk}^TB_{kj}
\end{align*}

The correctness can easily be verified. It remains to show that only $n(n^2+3n+1)/2$ multiplications are required by these formulas. To show this we first show that for every $l \in \{1,\dots,n\}$ every multiplication of the form $(A_{il} + B_{jl}^T)(A_{(i+1)j}^T + B_{lj})$ is used for the computation of $C_{ij}$ and $C_{(i+1)l}$. It is clear by definition that every multiplication of the form $(A_{il} + B_{jl}^T)(A_{(i+1)j}^T + B_{lj})$ is used for the computation of $C_{ij}$. We have
\begin{displaymath}
((A_{il} + B_{jl}^T)(A_{(i+1)j}^T + B_{lj}))^T = (A_{(i+1)j} + B_{lj}^T)(A_{il}^T + B_{jl})
\end{displaymath}

Consider the first term in the computation of $C_{(i+1)l}$:
\begin{displaymath}
\sum_{k=1}^n (A_{(i+1)k} + B_{lk}^T)(A_{il}^T + B_{kl})
\end{displaymath}

Since the index $k$ runs over $1,\dots,n$ and $j \in \{1,\dots,n\}$ at some point the index $k$ takes the value $j$ and from this it follows that the multiplications of the form $(A_{il} + B_{jl}^T)(A_{(i+1)j}^T + B_{lj})$ are also used in the computation of $C_{(i+1)l}$. The next step is to count the multiplications. The index $i$ takes $n/2$ different values. The index $j$ takes $n$ different values. Thus overall $n^2$ multiplications of the form $(\sum_{k=1}^n A_{ik} + \sum_{k=1}^n B_{jk}^T)A_{(i+1)j}^T$ and $(\sum_{k=1}^n A_{(i+1)k} + \sum_{k=1}^n B_{jk}^T)A_{ij}^T$ are computed. Obviously every multiplication of the form $B_{jk}^TB_{kj}$ is used for the computation of $C_{ij}$ and $C_{(i+1)j}$. But $B_{jk}^TB_{kj}$ is also used for the computation of $C_{ik}$ and $C_{(i+1)k}$. Since the multiplications of the form $B_{jk}^TB_{kj}$ are independent from the index $i$ for every $i$ they can be reused. Overall we obtain $n(n-1)/2$ multiplications of the form $B_{jk}^TB_{kj}$ for $k \neq j$ and $n$ multiplications of the form $B_{jk}^TB_{kj}$ for $k = j$ that are computed. Since we have shown that every multiplication of the form $(A_{ik} + B_{jk}^T)(A_{(i+1)j}^T + B_{kj})$ is used for the computation of $C_{ij}$ and $C_{(i+1)k}$ it suffices to count only the multiplications of the form $(A_{(i+1)k} + B_{jk}^T)(A_{ij}^T + B_{kj})$ that are used for the computation of $C_{(i+1)j}$. The index $i$ takes $n/2$ different values, the index $j$ takes $n$ different values and the index $k$ runs over $1,\dots,n$. Hence $n \cdot n \cdot n/2 = n^3/2$ multiplications of the form $(A_{(i+1)k} + B_{jk}^T)(A_{ij}^T + B_{kj})$ are computed. By this the total number of computed multiplications is $2n^2/2 + n(n-1)/2 + n + n^3/2 = n(n^2 + 3n + 1)/2$.
\end{proof}

\begin{theorem}\label{commutative_recursive_2}
Let $R$ be a ring satisfying \eqref{involution}, let $n \geq 3$ be odd. Furthermore let $A$ and $B$ be $n \times n$ matrices over $R$. Then there is a non-bilinear algorithm using $n(n^2+3n+2)/2$ multiplications for computing the product $AB$.
\end{theorem}
\begin{proof}
Let $A$ and $B$ be matrices as in the Theorem. Let $A_{ij}$ denote the entries of $A$ and let $B_{ij}$ denote the entries of $B$ and let $C_{ij}$ denote the entries of $AB$. The product $AB$ can be computed as follows: For $i=1,3,5,\dots,n-2$ and $j=1,\dots,n$ let
\begin{align*}
C_{ij}&= \sum_{k=1}^n (A_{ik} + B_{jk}^T)(A_{(i+1)j}^T + B_{kj}) - (\sum_{k=1}^n A_{ik} + \sum_{k=1}^n B_{jk}^T)A_{(i+1)j}^T - \sum_{k=1}^n B_{jk}^TB_{kj}\\
C_{(i+1)j}&= \sum_{k=1}^n (A_{(i+1)k} + B_{jk}^T)(A_{ij}^T + B_{kj}) - (\sum_{k=1}^n A_{(i+1)k} + \sum_{k=1}^n B_{jk}^T)A_{ij}^T - \sum_{k=1}^n B_{jk}^TB_{kj}\\
C_{nj}&= \sum_{k=1}^n (A_{nk} + B_{jk}^T)(A_{nj}^T + B_{kj}) - (\sum_{k=1}^n A_{nk} + \sum_{k=1}^n B_{jk}^T)A_{nj}^T - \sum_{k=1}^n B_{jk}^TB_{kj}
\end{align*}

It is easy to verify these terms. The number of computed multiplications can be calculated analogously to Theorem \ref{commutative_recursive_1}.
\end{proof}

Let us compare the results of Theorems \ref{commutative_recursive_1} and \ref{commutative_recursive_2} with an overview of bilinear algorithms:
\begin{displaymath}
\begin{tabular}{|c|c|c|}
\hline
$n$ & Theorems \ref{commutative_recursive_1} and \ref{commutative_recursive_2} & \cite{ubersicht}\\
\hline
4 & 58 & 49\\
\hline
5 & 105 & 98\\
\hline
6 & 165 & 160\\
\hline
7 & 252 & 250\\
\hline
8 & 356 & 343\\
\hline
9 & \textbf{495} & 514\\
\hline
10 & \textbf{655} & 686\\
\hline
11 & \textbf{858} & 919\\
\hline
12 & 1086 & 1040\\
\hline
13 & \textbf{1365} & 1443\\
\hline
14 & \textbf{1673} & 1720\\
\hline
15 & \textbf{2040} & 2088\\
\hline
16 & 2440 & 2401\\
\hline
17 & \textbf{2907} & 2960\\
\hline
18 & 3411 & 3200\\
\hline
19 & \textbf{3990} & 4065\\
\hline
20 & 4610 & 4340\\
\hline
\end{tabular}
\end{displaymath}

\subsection{Optimization}
Note that for $k = j$ multiplications of the form $B_{jk}^TB_{kj}$ and $(A_{nk} + B_{jk}^T)(A_{nj}^T + B_{kj})$ can be computed more efficiently. Let
\begin{align*}
A = \begin{bmatrix}
A_{11} & A_{12}\\
A_{21} & A_{22}
\end{bmatrix}
\end{align*}

Let $C = A^TA$. Consider the computation of matrix $C$:
\begin{align*}
C_{11}&= A_{11}^TA_{11} + A_{21}^TA_{21}\\
C_{12}&= A_{11}^TA_{12} + A_{21}^TA_{22}\\
C_{21}&= A_{12}^TA_{11} + A_{22}^TA_{21}\\
C_{22}&= A_{12}^TA_{12} + A_{22}^TA_{22}
\end{align*}

Since we consider computations in a ring satisfying \eqref{involution} obviously $C_{12}^T = C_{21}$ holds. Thus, $A^TA$ can be computed using $4$ recursive calls and $2$ multiplications which we can compute with Theorems \ref{commutative_recursive_1} and \ref{commutative_recursive_2} again. Now, suppose we are working with matrices over a commutative ring in which $\sqrt{-1}$ exists for instance in the complex numbers. Then $A^TA$ can be computed using only $3$ recursive calls and $2$ multiplications likewise. To obtain this algorithm we adapted Strassen algorithm \cite{strassen:volker}:
\begin{theorem}
Let $R$ be a ring satisfying \eqref{involution}. Let $A$ be a $2 \times 2$ matrix over $R$ and let $R'$ be the underlying ring of the elements of $A$ and let there be an element $x \in R'$ such that $x^2 = -1$. Then the product $A^TA$ can be computed using $5$ multiplications.
\end{theorem}
\begin{proof}
Let
\begin{displaymath}
A = \begin{bmatrix}
A_{11} & A_{12}\\
A_{21} & A_{22}
\end{bmatrix}
\end{displaymath}
\begin{algorithm}\label{ATA_5}
Input: Matrix $A$
\begin{align*}
P_1&:=(A_{11}^T + \sqrt{-1}A_{22}^T)(A_{11} + \sqrt{-1}A_{22})\\
P_2&:=(A_{21}^T + A_{22}^T)(A_{21} + A_{22})\\
P_3&:=(A_{11}^T + A_{12}^T)(A_{11} + A_{12})\\
P_4&:=(A_{21}^T + \sqrt{-1}A_{11}^T)A_{22}\\
P_5&:=A_{11}^T(A_{12} - \sqrt{-1}A_{22})
\end{align*}
Output:
\begin{align*}
C_{11}&=P_1 + P_2 - P_4 - P_4^T\\
C_{12}&=P_4 + P_5=C_{21}^T\\
C_{22}&=-P_1 + P_3 - P_5 - P_5^T
\end{align*}
\end{algorithm}
\end{proof}

Note that the products $P_1$, $P_2$ and $P_3$ are recursive calls. Algorithm \ref{ATA_5} is even optimal in the sense that there is no algorithm for computing $A^TA$ that uses $5$ multiplications overall from which more than $3$ multiplications are recursive calls as the following result shows:
\begin{theorem}
Let $R$ be a ring satisfying \eqref{involution}. Let $A$ be a $2 \times 2$ matrix over $R$. Let $\gamma$ be an algorithm which requires $5$ multiplications to compute $A^TA$. Let $M_1, \dots, M_5$ be the multiplications computed by $\gamma$. Let $\alpha_i$ the left factor of $M_i$ and let $\beta_i$ the right factor of $M_i$. Then there are at most $3$ pairwise different indices $1 \leq i_1 < i_2 < i_3 \leq 5$ such that for $j=1,2,3$ the following holds:
\begin{displaymath}
\alpha_{i_j}^T = \beta_{i_j}
\end{displaymath}
\end{theorem}
\begin{proof}
Let $A$ be a matrix as in the Theorem. Let $\gamma$ be an algorithm, which requires $5$ multiplications for the computation of $A^TA$. Let $M_1, \dots, M_5$ be the computed  multiplications of $\gamma$. Let $\alpha_i$ the left factor of $M_i$ and let $\beta_i$ the right factor of $M_i$. Suppose there are $4$ pairwise different indices $1 \leq i_1 < i_2 < i_3 < i_4 \leq 5$ such that for $j=1,2,3,4$ we have $\alpha_{i_j}^T = \beta_{i_j}$. Without loss of generality we choose $i_j = j$. Let
\begin{align*}
E = \begin{bmatrix}
E_{11} & E_{12}\\
E_{21} & E_{22}
\end{bmatrix}
&&
F = \begin{bmatrix}
F_{11} & F_{12}\\
F_{21} & F_{22}
\end{bmatrix}
\end{align*}

We choose a new matrix $G$ as follows:
\begin{align*}
G^T = \begin{bmatrix}
E_{11} & E_{12}\\
F_{11}^T & F_{21}^T\\
E_{21} & E_{22}\\
F_{12}^T & F_{22}^T
\end{bmatrix}
\end{align*}

Then the product $EF$ is computed by $G^TG$. We apply $\gamma$ to compute the product $G^TG$. Since $G^T$ is a $4 \times 2$ matrix the multiplications $M_1, \dots, M_5$ compute products of $2 \times 1$ and $1 \times 2$ submatrices. For $i=1,\dots,4$ we have:
\begin{align*}
M_i = \alpha_i\beta_i = \beta_i^T\beta_i = \begin{bmatrix}
\beta_{i1}^T\\
\beta_{i2}^T
\end{bmatrix}
\begin{bmatrix}
\beta_{i1} & \beta_{i2}
\end{bmatrix}
\end{align*}
and
\begin{align*}
M_5 = \alpha_5\beta_5 = \begin{bmatrix}
\alpha_{51}\\
\alpha_{52}
\end{bmatrix}
\begin{bmatrix}
\beta_{51} & \beta_{52}
\end{bmatrix}
\end{align*}

For the computation of the products $M_1, \dots, M_5$ we need $16$ multiplications. For $i=1,\dots,4$ let:
\begin{align*}
M_{i1}&:= \beta_{i1}^T\beta_{i1}\\
M_{i2}&:= \beta_{i1}^T\beta_{i2}\\
M_{i3}&:= \beta_{i2}^T\beta_{i2}
\end{align*}

Furthermore let:
\begin{align*}
M_{51}&:=\alpha_{51}\beta_{51}\\
M_{52}&:=\alpha_{51}\beta_{52}\\
M_{53}&:=\alpha_{52}\beta_{51}\\
M_{54}&:=\alpha_{52}\beta_{52}
\end{align*}

For $i=1,\dots,4$ we have:
\begin{align*}
M_i = \begin{bmatrix}
M_{i1} & M_{i2}\\
M_{i2}^T & M_{i3}
\end{bmatrix}
\end{align*}

Besides:
\begin{align*}
M_5 = \begin{bmatrix}
M_{51} & M_{52}\\
M_{53} & M_{54}
\end{bmatrix}
\end{align*}

Let
\begin{align*}
G^TG = D = \begin{bmatrix}
D_1 & D_2\\
D_3 & D_4
\end{bmatrix}
\end{align*}

be the product computed by $\gamma$ in which for $i=1,\dots,4$ let
\begin{align*}
D_i = \begin{bmatrix}
D_{i1} & D_{i2}\\
D_{i3} & D_{i4}
\end{bmatrix}
\end{align*}

It follows:
\begin{align*}
D_1 = \begin{bmatrix}
E_{11} & E_{12}\\
F_{11}^T & F_{21}^T
\end{bmatrix}
\cdot
\begin{bmatrix}
E_{11}^T & F_{11}\\
E_{12}^T & F_{21}
\end{bmatrix}
&&
D_2 = \begin{bmatrix}
E_{11} & E_{12}\\
F_{11}^T & F_{21}^T
\end{bmatrix}
\cdot
\begin{bmatrix}
E_{21}^T & F_{12}\\
E_{22}^T & F_{22}
\end{bmatrix}
\\
D_3 = \begin{bmatrix}
E_{21} & E_{22}\\
F_{12}^T & F_{22}^T
\end{bmatrix}
\cdot
\begin{bmatrix}
E_{11}^T & F_{11}\\
E_{12}^T & F_{21}
\end{bmatrix}
&&
D_4 = \begin{bmatrix}
E_{21} & E_{22}\\
F_{12}^T & F_{22}^T
\end{bmatrix}
\cdot
\begin{bmatrix}
E_{21}^T & F_{12}\\
E_{22}^T & F_{22}
\end{bmatrix}
\end{align*}

For $k=1,\dots,5$ and $i=1,\dots,4$ there are constants $d_{ki}$ and $e_{ki}$ such that:
\begin{equation}\label{BerechnungDi}
D_i = \sum_{k=1}^5 d_{ki}M_k + \sum_{k=1}^5 e_{ki}M_k^T
\end{equation}

But we are only interested in the subproduct $EF$. We have:
\begin{align*}
EF = \begin{bmatrix}
E_{11} & E_{12}\\
E_{21} & E_{22}
\end{bmatrix}
\cdot
\begin{bmatrix}
F_{11} & F_{12}\\
F_{21} & F_{22}
\end{bmatrix}
=
\begin{bmatrix}
D_{12} & D_{22}\\
D_{32} & D_{42}
\end{bmatrix}
\end{align*}

However, by equation \eqref{BerechnungDi} only the products $M_{12}, M_{22}, M_{32}, M_{42}, M_{52}$ and $M_{53}$ come into question for computing the product $EF$. By this fact we have found an algorithm to compute the product of two $2 \times 2$ matrices using only $6$ multiplications. However, it was shown that even when using commutativity at least $7$ multiplications are required to compute the product of two $2 \times 2$ matrices \cite{hopcroft:kerr:min, winograd}. This is a contradiction.
\end{proof}

\subsection{Approximate algorithms}
Approximate algorithms for matrix multiplication were first introduced by Bini et al. \cite{bini}. The idea of approximate algorithms for matrix multiplication is to introduce an additional factor usually denoted by $\varepsilon$ or $\lambda$ that allows to compute an error term in addition to the correct product. Letting $\varepsilon$ tend to $0$ allows to compute the product with any accuracy. In addition approximate algorithms can be used to obtain an exact result. The interested reader is referred to \cite{bini_approx_exact}.

From the Theorems \ref{commutative_recursive_1} and \ref{commutative_recursive_2} we can easily obtain approximate algorithms.
\begin{theorem}\label{commutative_recursive_1_approximate}
Let $R$ be a ring satisfying \eqref{involution}, let $n \geq 2$ be even. Furthermore let $A$ and $B$ be $n \times n$ matrices over $R$. Then there is an approximate non-bilinear algorithm using $n(n^2+2n)/2$ multiplications for computing the product $AB$.
\end{theorem}
\begin{proof}
Let $A$ and $B$ be matrices as in the Theorem. Let $A_{ij}$ denote the entries of $A$ and let $B_{ij}$ denote the entries of $B$ and let $C_{ij}$ denote the entries of $AB$. The product $AB$ can be computed as follows: For $i=1,3,5,\dots,n-1$ and $j=1,\dots,n$ let
\begin{align*}
C_{ij}&= \frac{1}{\varepsilon}\left(\sum_{k=1}^n (A_{ik} + B_{jk}^T\varepsilon)(A_{(i+1)j}^T + B_{kj}\varepsilon) - (\sum_{k=1}^n A_{ik} + \sum_{k=1}^n B_{jk}^T\varepsilon)A_{(i+1)j}^T\right)\\
C_{(i+1)j}&= \frac{1}{\varepsilon}\left(\sum_{k=1}^n (A_{(i+1)k} + B_{jk}^T\varepsilon)(A_{ij}^T + B_{kj}\varepsilon) - (\sum_{k=1}^n A_{(i+1)k} + \sum_{k=1}^n B_{jk}^T\varepsilon)A_{ij}^T\right)
\end{align*}
\end{proof}
\begin{theorem}\label{commutative_recursive_2_approximate}
Let $R$ be a ring satisfying \eqref{involution}, let $n \geq 3$ be odd. Furthermore let $A$ and $B$ be $n \times n$ matrices over $R$. Then there is an approximate non-bilinear algorithm using $n(n^2+2n+1)/2$ multiplications for computing the product $AB$.
\end{theorem}
\begin{proof}
Let $A$ and $B$ be matrices as in the Theorem. Let $A_{ij}$ denote the entries of $A$ and let $B_{ij}$ denote the entries of $B$ and let $C_{ij}$ denote the entries of $AB$. The product $AB$ can be computed as follows: For $i=1,3,5,\dots,n-2$ and $j=1,\dots,n$ let
\begin{align*}
C_{ij}&= \frac{1}{\varepsilon}\left(\sum_{k=1}^n (A_{ik} + B_{jk}^T\varepsilon)(A_{(i+1)j}^T + B_{kj}\varepsilon) - (\sum_{k=1}^n A_{ik} + \sum_{k=1}^n B_{jk}^T\varepsilon)A_{(i+1)j}^T\right)\\
C_{(i+1)j}&= \frac{1}{\varepsilon}\left(\sum_{k=1}^n (A_{(i+1)k} + B_{jk}^T\varepsilon)(A_{ij}^T + B_{kj}\varepsilon) - (\sum_{k=1}^n A_{(i+1)k} + \sum_{k=1}^n B_{jk}^T\varepsilon)A_{ij}^T\right)\\
C_{nj}&= \frac{1}{\varepsilon}\left(\sum_{k=1}^n (A_{nk} + B_{jk}^T\varepsilon)(A_{nj}^T + B_{kj}\varepsilon) - (\sum_{k=1}^n A_{nk} + \sum_{k=1}^n B_{jk}^T\varepsilon)A_{nj}^T\right)
\end{align*}
\end{proof}

Note that the algorithms of Theorems \ref{commutative_recursive_1_approximate} and \ref{commutative_recursive_2_approximate} are of degree $1$. In the case where $n = 5$ Theorem \ref{commutative_recursive_2_approximate} yields an algorithm using only $90$ multiplications. The best approximate bilinear algorithm for $5 \times 5$ matrices we are aware of was published by Smirnov \cite{smirnov5x5} and uses also $90$ multiplications. But in this special case we were even able to find an approximate non-bilinear algorithm that uses only $89$ multiplications. Nevertheless this algorithm is only of degree $2$.
\begin{theorem}
There is an approximate non-bilinear algorithm that computes the product of two $5 \times 5$ matrices with only $89$ multiplications.
\end{theorem}
\begin{proof}
Let
\begin{align*}
A&= \begin{bmatrix}
A_{11} & A_{12} & A_{13} & A_{14} & A_{15}\\
A_{21} & A_{22} & A_{23} & A_{24} & A_{25}\\
A_{31} & A_{32} & A_{33} & A_{34} & A_{35}\\
A_{41} & A_{42} & A_{43} & A_{44} & A_{45}\\
A_{51} & A_{52} & A_{53} & A_{54} & A_{55}
\end{bmatrix}
&&
B = \begin{bmatrix}
B_{11} & B_{12} & B_{13} & B_{14} & B_{15}\\
B_{21} & B_{22} & B_{23} & B_{24} & B_{25}\\
B_{31} & B_{32} & B_{33} & B_{34} & B_{35}\\
B_{41} & B_{42} & B_{43} & B_{44} & B_{45}\\
B_{51} & B_{52} & B_{53} & B_{54} & B_{55}
\end{bmatrix}
\end{align*}
Let
\begin{displaymath}
AB = \begin{bmatrix}
C_{11} & C_{12} & C_{13} & C_{14} & C_{15}\\
C_{21} & C_{22} & C_{23} & C_{24} & C_{25}\\
C_{31} & C_{32} & C_{33} & C_{34} & C_{35}\\
C_{41} & C_{42} & C_{43} & C_{44} & C_{45}\\
C_{51} & C_{52} & C_{53} & C_{54} & C_{55}
\end{bmatrix}
\end{displaymath}
\begin{algorithm}\label{5x5_89}
Input: Matrices $A$ and $B$. Let:
\begin{align*}
M_1&:=(A_{11} + B_{11}^T\varepsilon^2)(B_{11}\varepsilon^2 + A_{21}^T) && M_6:=(A_{11} + B_{21}^T\varepsilon^2)(B_{12}\varepsilon^2 + A_{22}^T)\\
M_2&:=(A_{12} + B_{12}^T\varepsilon^2)(B_{21}\varepsilon^2 + A_{21}^T) && M_7:=(A_{12} + B_{22}^T\varepsilon^2)(B_{22}\varepsilon^2 + A_{22}^T)\\
M_3&:=(A_{13} + B_{13}^T\varepsilon^2)(B_{31}\varepsilon^2 + A_{21}^T) && M_8:=(A_{13} + B_{23}^T\varepsilon^2)(B_{32}\varepsilon^2 + A_{22}^T)\\
M_4&:=(A_{14} + B_{14}^T\varepsilon^2)(B_{41}\varepsilon^2 + A_{21}^T) && M_9:=(A_{14} + B_{24}^T\varepsilon^2)(B_{42}\varepsilon^2 + A_{22}^T)\\
M_5&:=(A_{15} + B_{15}^T\varepsilon^2)(B_{51}\varepsilon^2 + A_{21}^T) && M_{10}:=(A_{15} + B_{25}^T\varepsilon^2)(B_{52}\varepsilon^2 + A_{22}^T)
\end{align*}
\begin{align*}
M_{11}&:=(A_{11} + B_{31}^T\varepsilon^2)(B_{13}\varepsilon^2 + A_{23}^T) && M_{16}:=(A_{11} + B_{41}^T\varepsilon^2)(B_{14}\varepsilon^2 + A_{24}^T)\\
M_{12}&:=(A_{12} + B_{32}^T\varepsilon^2)(B_{23}\varepsilon^2 + A_{23}^T) && M_{17}:=(A_{12} + B_{42}^T\varepsilon^2)(B_{24}\varepsilon^2 + A_{24}^T)\\
M_{13}&:=(A_{13} + B_{33}^T\varepsilon^2)(B_{33}\varepsilon^2 + A_{23}^T) && M_{18}:=(A_{13} + B_{43}^T\varepsilon^2)(B_{34}\varepsilon^2 + A_{24}^T)\\
M_{14}&:=(A_{14} + B_{34}^T\varepsilon^2)(B_{43}\varepsilon^2 + A_{23}^T) && M_{19}:=(A_{14} + B_{44}^T\varepsilon^2)(B_{44}\varepsilon^2 + A_{24}^T)\\
M_{15}&:=(A_{15} + B_{35}^T\varepsilon^2)(B_{53}\varepsilon^2 + A_{23}^T) && M_{20}:=(A_{15} + B_{45}^T\varepsilon^2)(B_{54}\varepsilon^2 + A_{24}^T)\\
\end{align*}
\begin{align*}
M_{21}&:=(A_{11} + A_{12} + A_{13} + A_{14} + A_{15} + B_{11}^T\varepsilon^2 + B_{12}^T\varepsilon^2 + B_{13}^T\varepsilon^2 + B_{14}^T\varepsilon^2 + B_{15}^T\varepsilon^2)A_{21}^T\\
M_{22}&:=(A_{11} + A_{12} + A_{13} + A_{14} + A_{15} + B_{21}^T\varepsilon^2 + B_{22}^T\varepsilon^2 + B_{23}^T\varepsilon^2 + B_{24}^T\varepsilon^2 + B_{25}^T\varepsilon^2)A_{22}^T\\
M_{23}&:=(A_{11} + A_{12} + A_{13} + A_{14} + A_{15} + B_{31}^T\varepsilon^2 + B_{32}^T\varepsilon^2 + B_{33}^T\varepsilon^2 + B_{34}^T\varepsilon^2 + B_{35}^T\varepsilon^2)A_{23}^T\\
M_{24}&:=(A_{11} + A_{12} + A_{13} + A_{14} + A_{15} + B_{41}^T\varepsilon^2 + B_{42}^T\varepsilon^2 + B_{43}^T\varepsilon^2 + B_{44}^T\varepsilon^2 + B_{45}^T\varepsilon^2)A_{24}^T
\end{align*}
\begin{align*}
M_{25}&:=(A_{21} + A_{22} + A_{23} + A_{24} + B_{11}^T\varepsilon^2 + B_{12}^T\varepsilon^2 + B_{13}^T\varepsilon^2 + B_{14}^T\varepsilon^2 - B_{15}^T\varepsilon^3)A_{11}^T\\
M_{26}&:=(A_{21} + A_{22} + A_{23} + A_{24} + B_{21}^T\varepsilon^2 + B_{22}^T\varepsilon^2 + B_{23}^T\varepsilon^2 + B_{24}^T\varepsilon^2 - B_{25}^T\varepsilon^3)A_{12}^T\\
M_{27}&:=(A_{21} + A_{22} + A_{23} + A_{24} + B_{31}^T\varepsilon^2 + B_{32}^T\varepsilon^2 + B_{33}^T\varepsilon^2 + B_{34}^T\varepsilon^2 - B_{35}^T\varepsilon^3)A_{13}^T\\
M_{28}&:=(A_{21} + A_{22} + A_{23} + A_{24} + B_{41}^T\varepsilon^2 + B_{42}^T\varepsilon^2 + B_{43}^T\varepsilon^2 + B_{44}^T\varepsilon^2 - B_{45}^T\varepsilon^3)A_{14}^T\\
M_{29}&:=(A_{21} + A_{22} + A_{23} + A_{24} + B_{51}^T\varepsilon^2 + B_{52}^T\varepsilon^2 + B_{53}^T\varepsilon^2 + B_{54}^T\varepsilon^2 - B_{55}^T\varepsilon^3)A_{15}^T
\end{align*}
\begin{align*}
M_{30}&:=(A_{11} + A_{12} + A_{13} + A_{14} + A_{15})(A_{21}^T + A_{22}^T + A_{23}^T + A_{24}^T)
\end{align*}
\begin{align*}
M_{31}&:=(A_{31} + B_{11}^T\varepsilon^2)(B_{11}\varepsilon^2 + A_{41}^T) && M_{36}:=(A_{31} + B_{21}^T\varepsilon^2)(B_{12}\varepsilon^2 + A_{42}^T)\\
M_{32}&:=(A_{32} + B_{12}^T\varepsilon^2)(B_{21}\varepsilon^2 + A_{41}^T) && M_{37}:=(A_{32} + B_{22}^T\varepsilon^2)(B_{22}\varepsilon^2 + A_{42}^T)\\
M_{33}&:=(A_{33} + B_{13}^T\varepsilon^2)(B_{31}\varepsilon^2 + A_{41}^T) && M_{38}:=(A_{33} + B_{23}^T\varepsilon^2)(B_{32}\varepsilon^2 + A_{42}^T)\\
M_{34}&:=(A_{34} + B_{14}^T\varepsilon^2)(B_{41}\varepsilon^2 + A_{41}^T) && M_{39}:=(A_{34} + B_{24}^T\varepsilon^2)(B_{42}\varepsilon^2 + A_{42}^T)\\
M_{35}&:=(A_{35} + B_{15}^T\varepsilon^2)(B_{51}\varepsilon^2 + A_{41}^T) && M_{40}:=(A_{35} + B_{25}^T\varepsilon^2)(B_{52}\varepsilon^2 + A_{42}^T)
\end{align*}
\begin{align*}
M_{41}&:=(A_{31} + B_{31}^T\varepsilon^2)(B_{13}\varepsilon^2 + A_{43}^T) && M_{46}:=(A_{31} + B_{41}^T\varepsilon^2)(B_{14}\varepsilon^2 + A_{44}^T)\\
M_{42}&:=(A_{32} + B_{32}^T\varepsilon^2)(B_{23}\varepsilon^2 + A_{43}^T) && M_{47}:=(A_{32} + B_{42}^T\varepsilon^2)(B_{24}\varepsilon^2 + A_{44}^T)\\
M_{43}&:=(A_{33} + B_{33}^T\varepsilon^2)(B_{33}\varepsilon^2 + A_{43}^T) && M_{48}:=(A_{33} + B_{43}^T\varepsilon^2)(B_{34}\varepsilon^2 + A_{44}^T)\\
M_{44}&:=(A_{34} + B_{34}^T\varepsilon^2)(B_{43}\varepsilon^2 + A_{43}^T) && M_{49}:=(A_{34} + B_{44}^T\varepsilon^2)(B_{44}\varepsilon^2 + A_{44}^T)\\
M_{45}&:=(A_{35} + B_{35}^T\varepsilon^2)(B_{53}\varepsilon^2 + A_{43}^T) && M_{50}:=(A_{35} + B_{45}^T\varepsilon^2)(B_{54}\varepsilon^2 + A_{44}^T)
\end{align*}
\begin{align*}
M_{51}&:=(A_{31} + A_{32} + A_{33} + A_{34} + A_{35} + B_{11}^T\varepsilon^2 + B_{12}^T\varepsilon^2 + B_{13}^T\varepsilon^2 + B_{14}^T\varepsilon^2 + B_{15}^T\varepsilon^2)A_{41}^T\\
M_{52}&:=(A_{31} + A_{32} + A_{33} + A_{34} + A_{35} + B_{21}^T\varepsilon^2 + B_{22}^T\varepsilon^2 + B_{23}^T\varepsilon^2 + B_{24}^T\varepsilon^2 + B_{25}^T\varepsilon^2)A_{42}^T\\
M_{53}&:=(A_{31} + A_{32} + A_{33} + A_{34} + A_{35} + B_{31}^T\varepsilon^2 + B_{32}^T\varepsilon^2 + B_{33}^T\varepsilon^2 + B_{34}^T\varepsilon^2 + B_{35}^T\varepsilon^2)A_{43}^T\\
M_{54}&:=(A_{31} + A_{32} + A_{33} + A_{34} + A_{35} + B_{41}^T\varepsilon^2 + B_{42}^T\varepsilon^2 + B_{43}^T\varepsilon^2 + B_{44}^T\varepsilon^2 + B_{45}^T\varepsilon^2)A_{44}^T
\end{align*}
\begin{align*}
M_{55}&:=(A_{41} + A_{42} + A_{43} + A_{44} + B_{11}^T\varepsilon^2 + B_{12}^T\varepsilon^2 + B_{13}^T\varepsilon^2 + B_{14}^T\varepsilon^2 - B_{15}^T\varepsilon^3)A_{31}^T\\
M_{56}&:=(A_{41} + A_{42} + A_{43} + A_{44} + B_{21}^T\varepsilon^2 + B_{22}^T\varepsilon^2 + B_{23}^T\varepsilon^2 + B_{24}^T\varepsilon^2 - B_{25}^T\varepsilon^3)A_{32}^T\\
M_{57}&:=(A_{41} + A_{42} + A_{43} + A_{44} + B_{31}^T\varepsilon^2 + B_{32}^T\varepsilon^2 + B_{33}^T\varepsilon^2 + B_{34}^T\varepsilon^2 - B_{35}^T\varepsilon^3)A_{33}^T\\
M_{58}&:=(A_{41} + A_{42} + A_{43} + A_{44} + B_{41}^T\varepsilon^2 + B_{42}^T\varepsilon^2 + B_{43}^T\varepsilon^2 + B_{44}^T\varepsilon^2 - B_{45}^T\varepsilon^3)A_{34}^T\\
M_{59}&:=(A_{41} + A_{42} + A_{43} + A_{44} + B_{51}^T\varepsilon^2 + B_{52}^T\varepsilon^2 + B_{53}^T\varepsilon^2 + B_{54}^T\varepsilon^2 - B_{55}^T\varepsilon^3)A_{35}^T
\end{align*}
\begin{align*}
M_{60}&:=(A_{31} + A_{32} + A_{33} + A_{34} + A_{35})(A_{41}^T + A_{42}^T + A_{43}^T + A_{44}^T)
\end{align*}
\begin{align*}
M_{61}&:=(A_{25} + A_{51}\varepsilon + A_{45})(B_{51} + B_{11}\varepsilon) && M_{66}:=(A_{51}\varepsilon + A_{45})B_{51}\\
M_{62}&:=(A_{25} + A_{52}\varepsilon + A_{45})(B_{52} + B_{21}\varepsilon) && M_{67}:=(A_{52}\varepsilon + A_{45})B_{52}\\
M_{63}&:=(A_{25} + A_{53}\varepsilon + A_{45})(B_{53} + B_{31}\varepsilon) && M_{68}:=(A_{53}\varepsilon + A_{45})B_{53}\\
M_{64}&:=(A_{25} + A_{54}\varepsilon + A_{45})(B_{54} + B_{41}\varepsilon) && M_{69}:=(A_{54}\varepsilon + A_{45})B_{54}\\
M_{65}&:=(A_{25} + A_{55}\varepsilon + A_{45})(B_{55} + B_{51}\varepsilon) && M_{70}:=(A_{55}\varepsilon + A_{45})B_{55}
\end{align*}
\begin{align*}
M_{71}&:=(A_{25} + A_{45})(B_{51} + B_{52} + B_{53} + B_{54} + B_{55} + B_{11}\varepsilon + B_{21}\varepsilon + B_{31}\varepsilon + B_{41}\varepsilon + B_{51}\varepsilon)
\end{align*}
\begin{align*}
M_{72}&:=A_{45}(B_{51} + B_{52} + B_{53} + B_{54} + B_{55})
\end{align*}
\begin{align*}
M_{73}&:=(A_{51} + B_{42}^T\varepsilon^2)(B_{12}\varepsilon^2 + A_{54}^T) && M_{76}:=(A_{51} + B_{52}^T\varepsilon^2)(B_{13}\varepsilon^2 + A_{55}^T)\\
M_{74}&:=(A_{52} + B_{43}^T\varepsilon^2)(B_{22}\varepsilon^2 + A_{54}^T) && M_{77}:=(A_{52} + B_{53}^T\varepsilon^2)(B_{23}\varepsilon^2 + A_{55}^T)\\
M_{75}&:=(A_{53} + B_{44}^T\varepsilon^2)(B_{32}\varepsilon^2 + A_{54}^T) && M_{78}:=(A_{53} + B_{54}^T\varepsilon^2)(B_{33}\varepsilon^2 + A_{55}^T)
\end{align*}
\begin{align*}
M_{79}&:=(A_{51} + A_{52} + A_{53} + B_{42}^T\varepsilon^2 + B_{43}^T\varepsilon^2 + B_{44}^T\varepsilon^2)A_{54}^T\\
M_{80}&:=(A_{51} + A_{52} + A_{53} + B_{52}^T\varepsilon^2 + B_{53}^T\varepsilon^2 + B_{54}^T\varepsilon^2)A_{55}^T
\end{align*}
\begin{align*}
M_{81}&:=(A_{54} + A_{55} + B_{12}^T\varepsilon^2 + B_{13}^T\varepsilon^2 - B_{14}^T\varepsilon^3)A_{51}^T\\
M_{82}&:=(A_{54} + A_{55} + B_{22}^T\varepsilon^2 + B_{23}^T\varepsilon^2 - B_{24}^T\varepsilon^3)A_{52}^T\\
M_{83}&:=(A_{54} + A_{55} + B_{32}^T\varepsilon^2 + B_{33}^T\varepsilon^2 - B_{34}^T\varepsilon^3)A_{53}^T
\end{align*}
\begin{align*}
M_{84}&:=(A_{51} + A_{52} + A_{53})(A_{54}^T + A_{55}^T)
\end{align*}
\begin{align*}
M_{85}&:=A_{51}B_{15}\\
M_{86}&:=A_{52}B_{25}\\
M_{87}&:=A_{53}B_{35}\\
M_{88}&:=A_{54}B_{45}\\
M_{89}&:=A_{55}B_{55}
\end{align*}
Output:
\begin{align*}
C_{11}&=\frac{1}{\varepsilon^2}(M_1 + M_2 + M_3 + M_4 + M_5 - M_{21})\\
C_{12}&=\frac{1}{\varepsilon^2}(M_6 + M_7 + M_8 + M_9 + M_{10} - M_{22})\\
C_{13}&=\frac{1}{\varepsilon^2}(M_{11} + M_{12} + M_{13} + M_{14} + M_{15} - M_{23})\\
C_{14}&=\frac{1}{\varepsilon^2}(M_{16} + M_{17} + M_{18} + M_{19} + M_{20} - M_{24})\\
C_{15}&=\frac{1}{\varepsilon^3}(M_1 + M_2 + M_3 + M_4 + M_5 + M_6 + M_7 + M_8 + M_9 + M_{10} + M_{11} + M_{12}\\
& + M_{13} + M_{14} + M_{15} + M_{16} + M_{17} + M_{18} + M_{19} + M_{20} - M_{21} - M_{22} - M_{23} - M_{24}\\
& - M_{25}^T - M_{26}^T - M_{27}^T - M_{28}^T - M_{29}^T + M_{30})
\end{align*}
\begin{align*}
C_{21}&=\frac{1}{\varepsilon^2}(M_1^T + M_6^T + M_{11}^T + M_{16}^T - M_{25}) + M_{61} - M_{66}\\
C_{22}&=\frac{1}{\varepsilon^2}(M_2^T + M_7^T + M_{12}^T + M_{17}^T - M_{26}) + M_{62} - M_{67}\\
C_{23}&=\frac{1}{\varepsilon^2}(M_3^T + M_8^T + M_{13}^T + M_{18}^T - M_{27}) + M_{63} - M_{68}\\
C_{24}&=\frac{1}{\varepsilon^2}(M_4^T + M_9^T + M_{14}^T + M_{19}^T - M_{28}) + M_{64} - M_{69}\\
C_{25}&=\frac{1}{\varepsilon^2}(M_5^T + M_{10}^T + M_{15}^T + M_{20}^T - M_{29}) + M_{65} - M_{70}
\end{align*}
\begin{align*}
C_{31}&=\frac{1}{\varepsilon^2}(M_{31} + M_{32} + M_{33} + M_{34} + M_{35} - M_{51})\\
C_{32}&=\frac{1}{\varepsilon^2}(M_{36} + M_{37} + M_{38} + M_{39} + M_{40} - M_{52})\\
C_{33}&=\frac{1}{\varepsilon^2}(M_{41} + M_{42} + M_{43} + M_{44} + M_{45} - M_{53})\\
C_{34}&=\frac{1}{\varepsilon^2}(M_{46} + M_{47} + M_{48} + M_{49} + M_{50} - M_{54})\\
C_{35}&=\frac{1}{\varepsilon^3}(M_{31} + M_{32} + M_{33} + M_{34} + M_{35} + M_{36} + M_{37} + M_{38} + M_{39} + M_{40} + M_{41} + M_{42}\\
& + M_{43} + M_{44} + M_{45} + M_{46} + M_{47} + M_{48} + M_{49} + M_{50} - M_{51} - M_{52} - M_{53} - M_{54}\\
& - M_{55}^T - M_{56}^T - M_{57}^T - M_{58}^T - M_{59}^T + M_{60})
\end{align*}
\begin{align*}
C_{41}&=\frac{1}{\varepsilon^2}(M_{31}^T + M_{36}^T + M_{41}^T + M_{46}^T - M_{55}) + M_{66}\\
C_{42}&=\frac{1}{\varepsilon^2}(M_{32}^T + M_{37}^T + M_{42}^T + M_{47}^T - M_{56}) + M_{67}\\
C_{43}&=\frac{1}{\varepsilon^2}(M_{33}^T + M_{38}^T + M_{43}^T + M_{48}^T - M_{57}) + M_{68}\\
C_{44}&=\frac{1}{\varepsilon^2}(M_{34}^T + M_{39}^T + M_{44}^T + M_{49}^T - M_{58}) + M_{69}\\
C_{45}&=\frac{1}{\varepsilon^2}(M_{35}^T + M_{40}^T + M_{45}^T + M_{50}^T - M_{59}) + M_{70}
\end{align*}
\begin{align*}
C_{51}&=\frac{1}{\varepsilon^2}(M_{61} + M_{62} + M_{63} + M_{64} + M_{65} - M_{66} - M_{67} - M_{68} - M_{69} - M_{70} - M_{71} + M_{72})\\
C_{52}&=\frac{1}{\varepsilon^2}(M_{73} + M_{74} + M_{75} - M_{79} + M_{73}^T + M_{76}^T - M_{81})\\
C_{53}&=\frac{1}{\varepsilon^2}(M_{76} + M_{77} + M_{78} - M_{80} + M_{74}^T + M_{77}^T - M_{82})\\
C_{54}&=\frac{1}{\varepsilon^3}(M_{73} + M_{74} + M_{75} + M_{76} + M_{77} + M_{78} - M_{79} - M_{80} - M_{81}^T - M_{82}^T - M_{83}^T + M_{84} + M_{75}^T\varepsilon + M_{78}^T\varepsilon - M_{83}\varepsilon)\\
C_{55}&= M_{85} + M_{86} + M_{87} + M_{88} + M_{89}
\end{align*}
\end{algorithm}
\end{proof}

\section{Open Problem}
We denote the minimal number of multiplications required by a non-bilinear algorithm to compute the product of two $n \times n$ matrices by non-bilinear rank and use $R_{nb}(\langle n,n,n \rangle)$ for that. Since by definition every bilinear algorithm is a non-bilinear algorithm it is clear that
\begin{displaymath}
R_{nb}(\langle n,n,n \rangle) \leq R(\langle n,n,n \rangle)
\end{displaymath}
But it remains open if equality holds.

\section{Acknowledgment}
I am grateful to Michael Figelius and Markus Lohrey for helpful comments.

\bibliographystyle{abbrv}

\end{document}